\renewcommand{\natural}{{\mathbb{N}}}
\newcommand{\integernonnegative}{\ensuremath{\mathbb{Z}}_{\ge 0}}
\newcommand{\real}{\ensuremath{\mathbb{R}}}
\newcommand{\until}[1]{\{1,\dots, #1\}}
\newcommand{\supscr}[2]{#1^{\textup{#2}}}
\newcommand{\diag}[1]{\operatorname{diag}(#1)}
\newcommand{\rank}{\operatorname{rank}}
\newcommand{\graph}{G}
\newcommand{\edges}{E}
\newcommand{\neighbors}{\mathcal{N}}
\renewcommand{\epsilon}{\varepsilon}
\def\pgf@circ@spst@path#1{\pgf@circ@bipole@path{spst}{#1}}
\tikzset{switch/.style = {\circuitikzbasekey, /tikz/to path=\pgf@circ@spst@path, l=#1}}
\tikzset{spst/.style = {switch = #1}}
\let\proof\@undefined                        
\let\endproof\@undefined                  
\renewcommand{\graph}{\mathcal{G}}
\newcommand{\nodes}{\mathcal{V}}
\renewcommand{\edges}{\mathcal{E}}
\newcommand{\interval}{\mathcal{I}}
\newtheorem{prop}{Proposition} 
\newtheorem{thm}{Theorem}
	\newtheorem{assumption}{Assumption}
\newtheorem{lem}{Lemma}
\newtheorem{defn}{Definition}
\newtheorem{rem}{Remark}
\newtheorem{problem}{Problem}
\let\oldReturn\Return
\renewcommand{\Return}{\State\oldReturn}
\let\oldbibliography\thebibliography
\renewcommand{\thebibliography}[1]{%
  \oldbibliography{#1}%
}
\newcommand{\marginresolved}[1]{}
\newcommand{\scott}[1]{{\color{black} #1}}
\newcommand{\mmo}[1]{{\color{black} #1}}
\newcommand{\kj}[1]{{\color{black} #1}}
\newcommand{\dsisoalgo}{\textsc{Distributed Simultaneous Input \&
    State Interval Observer}\;}
\newcommand{\dsiso}{\textsc{DSISO}}
\newcommandx{\tightmath}[5][1=0,2=0,3=0,4=0]{{%
    \thickmuskip=#1mu
    \medmuskip=#2mu
    \thinmuskip=#3mu
    \arraycolsep=#4pt
    #5}}
\begin{document}

\title{\LARGE \bf Distributed Resilient Interval Observers for
  Bounded-Error LTI Systems Subject to False Data Injection
  Attacks} 

\author{%
  Mohammad Khajenejad$^*$, Scott Brown$^*$, and Sonia Mart{\'\i}nez \\
  \thanks{$^*$Equal contribution. M. Khajenejad, S. Brown and S. Mart{\'\i}nez are with the
    Mechanical and Aerospace Engineering Department of the Jacobs
    School of Engineering, University of California, San Diego, La
    Jolla, San Diego, CA, USA. (e-mail: \{mkhajenejad, sab007,
    soniamd\}@ucsd.edu.)}  \thanks{This work is partially funded by
    ARL and ONR grants W911NF-23-2-0009 and N00014-19-1-2471.}
}

\maketitle
\thispagestyle{empty}
\pagestyle{empty}
{
\begin{abstract}
  This paper proposes a novel distributed interval-valued simultaneous
  state and input \kj{observer} for linear time-invariant (LTI)
  systems \kj{that are} subject to attacks or unknown inputs, \kj{injected both on their
    sensors and actuators}. Each agent in the network leverages a
  singular value decomposition (SVD) based transformation \kj{to decompose its observations into two components, one of them unaffected by the attack signal, which helps} to obtain
  \emph{local} interval estimates of the state and unknown input \kj{and} then
  uses intersection to compute the best \kj{interval} estimate among neighboring
  nodes. We show that the computed intervals are guaranteed to contain
  the true state and input trajectories, and we provide conditions
  under which the observer is stable. Furthermore, we provide a method
  for designing stabilizing gains that minimize an upper bound on the
  worst-case steady-state observer error. We demonstrate our
  algorithm on an IEEE 14-bus power system.
\end{abstract}

\section{Introduction}
The control of Cyber-Physical Systems (CPS) relies on the tight
integration of various computational, communication, and sensor
components that interact with each other and \kj{with} the physical world in a
complex way. Applications of CPS are broad and include, to name a few,
industrial infrastructures~\cite{cheng2018industrial}, power grids
\cite{zhao2019dse}, and intelligent transportation systems
\cite{sun2017secure}. In such \emph{safety-critical} systems, serious
detriment can occur in case of malfunction or if jeopardized by
malicious attackers \cite{ukraine.2016}.
One of the most serious types of attacks consists of false-data
injection, by which counterfeit data signals are injected into the
actuator signals and sensor measurements by strategic and/or malicious
agents. Such attacks are not well-modeled by  zero-mean, Gaussian
white noises nor by signals with known bounds, given their strategic
nature. \kj{On the other hand, most of the centralized approaches to state estimation are computationally expensive, especially for realistic high-dimensional CPS.}
Consequently, reliable \kj{distributed state and unknown input estimation algorithms} 
are indispensable for the sake of resilient control, attack
identification, and mitigation.

Motivated by this, several estimation algorithms have
been proposed, in which a central entity seeks to estimate both the
system state and the unknown disturbance (input). In the context where the
noise signals are Gaussian and white, a large body of work proposed
different designs for joint input and state estimation via
e.g., minimum variance unbiased estimation
\cite{Yong.Zhu.ea.Automatica16}, modified double-model adaptive
estimation \cite{lu2016framework}, or robust regularized least square
approaches \cite{abolhasani2018robust}. However, these approaches are
not applicable in the context of attack-resilient \emph{bounded error}
worst-case estimation against false data injection attacks, where no
information about the distribution of uncertainties is available. In
such a setting, numerous approaches were proposed for deterministic
systems \cite{Pasqualetti.2013}, stochastic systems
\cite{kim2016attack}, and bounded-error systems 
\cite{nakahira2015dynamic,pajic2015attack,yong2016robust}.
These approaches typically yield point estimates, i.e, the most
likely or best single estimate, as opposed to set-valued estimates.

Set-valued estimates have the advantage of providing hard accuracy bounds, which
are important to guarantee safety \cite{yong2018simultaneous,blanchini2012convex,khajenejad2021guaranteed,khajenejad2022direct}.  In
addition, the use of \emph{fixed-order} set-valued methods can help
decrease the complexity of optimal observers
\cite{milanese1991optimal}, which grows with time.  Hence,
fixed-order centralized set-valued observers were presented for
different classes of systems
\cite{yong2018simultaneous,khajenejad2019simultaneous,khajenejadasimultaneous,khajenejad2021intervalACC,ellero2019unknown,khajenejad2022h,pati2022l},
that simultaneously find bounded sets of compatible states and unknown
inputs.  However, these algorithms do not scale well in a networked
setting as the size of the network increases.  This motivates the
design of \textit{distributed input and state filters}, which have
typically focused on systems with stochastic disturbances \cite{lu2013,ashari2012}.
While these methods are more scalable and robust to communication
failures than their centralized counterparts, they generally have
comparatively worse estimation error. \kj{Moreover, they are not applicable in bounded-error settings where no information about the stochastic characteristics of noise/disturbance is available. With that in mind, in our previous work \cite{khajenejad2022distributed} we provided a distributed algorithm to synthesize interval observers for bounded-error LTI systems, without considering unknown input signals (attacks). In this work we aim to extend our design in \cite{khajenejad2022distributed} to address \emph{resiliency} against false data injection attacks, i.e., to synthesize distributed interval observers in the bounded-error settings that are stable and correct in the presence of unknown input/attack signals.}

\textit{Contributions. } This work aims to bridge the gap between
distributed resilient estimation algorithms and interval observer
design approaches in bounded-error settings and subject to completely
unknown and/or distribution-free inputs (attacks). In other words,
leveraging the notion of ``collective positive detectability over
neighborhoods'' (CPDN), we provide a distributed algorithm that
simultaneously synthesizes tractable and computationally efficient
interval-valued estimates for states and unknown inputs of bounded
error LTI systems, whose sensors and actuators are subject to false
data injection attacks. We provide conditions for the stability of our
proposed observer, which is shown to minimize a computed upper bound
for the observer error interval widths. 

\section{Preliminaries}

This section introduces basic notation, preliminary concepts, and
graph-theoretic notions used throughout the paper.

\textit{Notation.}  Let $\natural$, $\integernonnegative$,
$\mathbb{R}_{\geq 0}$, denote the sets of natural, nonnegative integer
and real numbers, respectively.  Similarly, $\real^n$ and
$\real^{n \times p}$ denote the $n$-dimensional Euclidean space, and
the set of $n \times p$ matrices, respectively. Given
$A^1,\dots,A^N \in \real^{n \times n}$,
$\diag{A^1,\dots,A^N} \in \real^{nN \times nN}$ is the block-diagonal
matrix with block-diagonal elements $A^i,i\in \until{N}$. For
$M \in \real^{n \times p}$, $M_i$ and $M_{ij}$ denote the
$\supscr{i}{th}$ row and the $\supscr{(i,j)}{th}$ entry of $M$,
respectively. Furthermore, we define $M^+ \in \real^{n \times p}$,
such that $M^+_{ij} \triangleq \max \{M_{ij}, 0 \}$,
$M^-\triangleq M^+-M$, and $|M|\triangleq M^++M^-$. All inequalities
$\leq$, $\geq$, as well as $\max$ and $\min$, are considered
element-wise. Given $M \in \real^{n\times n}$, $\rho(M)$ is used to
denote the spectral radius of $M$.  A multi-dimensional interval is
{denoted as
  $\interval \triangleq [\underline{s},\overline{s}] \subset
  \real^n$}, and is the set of vectors $x \in \real^n$ such that
$\underline{s} \le x \le \overline{s}$.
\begin{prop}\cite[Lemma 1]{efimov2013interval}\label{prop:bounding}
  Let $A \in \real^{p \times n}$ and $\underline{x} \leq x \leq
  \overline{x} \in \real^n$. Then, $A^+\underline{x}-A^{-}\overline{x}
  \leq Ax \leq A^+\overline{x}-A^{-}\underline{x}$. As a corollary, if
  $A$ is non-negative, $A\underline{x} \leq Ax \leq A\overline{x}$.
\end{prop}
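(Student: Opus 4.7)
The plan is to exploit the natural decomposition $A = A^+ - A^-$ that is built into the definitions of $A^+$ and $A^-$, combined with the fact that both of these matrices are entrywise non-negative. Everything then reduces to the elementary observation that multiplying a non-negative matrix by a vector preserves componentwise inequalities on that vector.

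More concretely, I would first note that from $A^- \triangleq A^+ - A$ we immediately get $A = A^+ - A^-$, and that by construction $A^+ \geq 0$ and $A^- \geq 0$ entrywise (the latter because $A^-_{ij} = \max\{-A_{ij},0\}$). Next, I would prove the auxiliary fact that for any entrywise non-negative matrix $B \in \real^{p \times n}$ and any $\underline{x} \leq x \leq \overline{x}$, one has $B\underline{x} \leq Bx \leq B\overline{x}$; this is a direct row-by-row computation since each entry $(Bx)_i = \sum_j B_{ij} x_j$ is a non-negative linear combination, and each term is monotone in $x_j$ when $B_{ij} \geq 0$.

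Applying this auxiliary fact twice, once to $A^+$ and once to $A^-$, yields
\begin{align*}
A^+ \underline{x} \leq A^+ x \leq A^+ \overline{x}, \qquad A^- \underline{x} \leq A^- x \leq A^- \overline{x}.
\end{align*}
Negating the second chain gives $-A^- \overline{x} \leq -A^- x \leq -A^- \underline{x}$, and adding it to the first chain produces
\begin{align*}
A^+ \underline{x} - A^- \overline{x} \leq (A^+ - A^-)x = Ax \leq A^+ \overline{x} - A^- \underline{x},
\end{align*}
which is exactly the desired bound. The corollary then follows by specialization: if $A \geq 0$, then $A^- = 0$ and $A^+ = A$, and the inequality collapses to $A\underline{x} \leq Ax \leq A\overline{x}$.

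There is no substantive obstacle here; the only subtlety worth flagging is the sign-flip when passing from $A^- x$ to $-A^- x$, which reverses the direction of the inequality and is what ultimately couples $\underline{x}$ with $A^+$ and $\overline{x}$ with $A^-$ in the lower bound (and vice versa in the upper bound). Everything else is a routine consequence of linearity and the non-negativity of $A^+$ and $A^-$.
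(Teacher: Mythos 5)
Your proof is correct, and it is the standard argument for this result: the paper itself gives no proof, citing it from \cite[Lemma 1]{efimov2013interval}, where essentially the same decomposition $A = A^+ - A^-$ into non-negative parts and the monotonicity of non-negative matrix--vector products is used. Nothing further is needed.
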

\textit{Graph-theoretic Notions.}  A directed graph
$\graph = (\nodes, \edges)$ is a set of nodes
$\nodes \triangleq \until{N}$ and edges
$\edges \subseteq \nodes \times \nodes$.  The neighbors of node $i$,
denoted $\neighbors_i$, is the set of all nodes $j$ for which
$(j,i)\in\edges$.  We will assume that $i \in \neighbors_i$.

\section{Problem Formulation} \label{sec:Problem}
\noindent\textbf{\textit{System Assumptions.}}
Consider a multi-agent system \mmo{(MAS)} consisting of $N$ agents,
which interact over a time-invariant communication graph $\graph$
\footnote{\scott{Later, the structure of the graph will play an important role
  in the satisfaction of Assumption~\ref{ass:col_pos_det_neigh}. We don't,
  however, require any specific assumptions about connectivity of the graph.
  Loosely speaking, if individual nodes
  have access to many complementary measurements, the graph need not
  be ``well connected''. The converse is also true, that a ``well
  connected'' graph can overcome a lack of individual measurements.}}. The
agents are able to obtain individual measurements of a target
system as described by the following LTI dynamics:
\begin{align}\label{eq:system}
\begin{array}{ll}
  \mathcal{P}:
  \begin{cases}
    x_{k+1} = Ax_k+Bw_k+Gd_k, &  \\
    y^i_k = C^ix_k+D^iv^i_k+H^id_k, & \hspace{-.1cm}i \in \nodes, k \in
    \integernonnegative,
  \end{cases}
\end{array}
\end{align}
where $x_k \in \real^{n}$ is the continuous state of the target
system, $d_k \in \mathbb{R}^{n_p}$ is a \emph{malicious
  disturbance} and
$w_k \in \interval_{w} \triangleq [\underline{w},\overline{w}] \subset
\real^{n_w}$ is bounded process noise.  At time step $k$, every agent
$i \in \nodes$ takes a measurement $y^i_k \in \real^{{m^i}}$, known
only to itself, which is perturbed by
$v^i_k \in \interval_{v}^i \triangleq [\underline{v}^i,\overline{v}^i]
\subset \real^{n_v^i}$, a bounded sensor (measurement) noise signal.
Finally, $A \in \real^{n \times n}$, $B \in \real^{n \times n_w}$,
$G \in \real^{n \times n_p}$, $C^i \in \real^{m^i \times n}$,
$D^i \in \real^{m^i \times n^i_v}$, and
$H^i \in \real^{m^i \times n_p}$ are system matrices known to all
agents, where $\rank(H^i)=r^i$.
Note that no restriction is
made on $H^i$ to be either the zero matrix (no direct feedthrough), or
to have full column rank when there is direct feedthrough.  The
agents' goal is to simultaneously estimate the trajectories of
\eqref{eq:system} as well as the unknown input $d_k$ in a distributed
manner, when states are initialized in an interval
$\interval_x \triangleq [\underline{x}_0,\overline{x}_0] \subset
\real^{n}$, with $\underline{x}_0,\overline{x}_0$ known to all agents.

\noindent \textbf{\emph{Unknown Input Signal Assumptions.}}
We make no assumption about the unknown signal $d_k$, i.e., we require
no prior knowledge such as its distribution, dynamics, or bounds.
\begin{rem}
 {\rm System \eqref{eq:system} can be easily extended to cover the case
  where {different} attack signals $d^s_k \in \mathbb{R}^{n_s}$ and
  $d^o_k \in \mathbb{R}^{n_o}$ with the corresponding matrices
  $\hat{G} \in \mathbb{R}^{n \times n_s}$ and
  $\hat{H}^i \in \mathbb{R}^{{m^i} \times n_o}$ are injected into the
  actuators and sensors, respectively. In this case, courtesy of the
  fact that the unknown input signals can be completely arbitrary, by
  lumping them into a {newly} defined unknown input signal
  $d_k \triangleq [(d^s_k)^\top (d^o_k)^\top]^\top \in
  \mathbb{R}^{n_p}, n_p \triangleq n_o+n_s$, as well as defining
  ${G}\triangleq \begin{bmatrix} \hat{G} & 0_{n \times
      n_o}\end{bmatrix}$,
  ${H}^i\triangleq \begin{bmatrix} 0_{{m^i} \times n_s} &
    \hat{H}^i \end{bmatrix}$, we can equivalently transform the
  considered system to a new representation, precisely in the form of
  \eqref{eq:system}.}
\end{rem}
\scott{%
\begin{defn}[State and Input Framers]\label{defn:framers}
  For an agent $i\in\nodes$, the sequences
  $\{\overline{x}^i_k,\underline{x}^i_k\}_{k \ge 0} \subseteq \real^n$
  are called upper and lower \emph{local state framers} for
  $\mathcal{P}$ if $\underline{x}^i_k \leq x_k \leq \overline{x}^i_k$
  for all $k \ge 0$. Moreover, we define the \emph{local state framer
    errors} as
  \begin{align}\label{eq:error_1}
    \begin{array}{rl}
      \underline{e}^{i}_{k}
      &\triangleq x_k-\underline{x}^i_k, \
        \overline{e}^{i}_{k} \triangleq \overline{x}^i_k-x_k,\ \forall k\geq 0.
    \end{array}
  \end{align}
  Finally, the \emph{collective state framer error} is defined as
  \begin{align}\label{eq:error_2}
    \begin{array}{rl}
      e_{k} &\triangleq  \begin{bmatrix}(\underline{e}^{1}_{k})^\top &
        (\overline{e}^{1}_{k})^\top & \cdots & (\underline{e}^{N}_{k})^\top &
        (\overline{e}^{N}_{k})^\top \end{bmatrix}^\top \in \real^{2Nn}.\\
    \end{array}
  \end{align}
The \emph{local input framers}
$\{\underline{d}^i_k, \overline{d}^i_k\}$, the \emph{local input
  framer errors} $\{\underline{\delta}^i_k, \overline{\delta}^i_k\}$,
and the \emph{collective input framer error} $\delta_k$ are defined
similarly with respect to the unknown input $d_k$.
\end{defn}}%
The problem of designing a distributed resilient state and input
interval observer addressed here is cast as follows:
\scott{
\begin{problem}\label{prob:SISIO}
  Given a multi-agent system, design a distributed resilient interval
  observer for $\mathcal{P}$, i.e., an algorithm that computes uniformly bounded
  state and input framers for $\mathcal{P}$.
\end{problem}}

\section{Proposed Distributed Interval Observer}
\label{sec:observer}
In this section, we describe our novel resilient distributed interval
observer design, its stability, and a tractable distributed procedure
for computing stabilizing observer gains.

\subsection{Distributed Input and State Framer and its Correctness}
Before describing our proposed observer, we first transform the system
into an equivalent representation which decouples the problem of
estimating the state and the adversarial input.
Inspired by the work in \cite{yong2018simultaneous}, we carry
out a singular value decomposition (SVD) on the feedthrough matrix
$H^i$, which decomposes the unknown input signal into two components
$d^i_{1,k}$ and $d^i_{2,k}$. Consequently, we obtain two constituents
for the measurement signal: $z^i_{1,k}$, which is affected by the
unknown input through an invertible feedthrough matrix
$S^i \in \mathbb{R}^{r^i \times r^i}$, and $z^i_{2,k}$, which is not
compromised by the unknown input signal. \kj{Then,} \eqref{eq:system}
can be represented as
\begin{subequations}\label{eq:system2}
  \begin{align}
    x_{k+1} &= Ax_k+Bw_k+G^i_1d^i_{1,k}+G^i_2d^i_{2,k}, \label{eq:system_2_state} \\
    z^i_{1,k} &= C_1^ix_k+D_1^iv^i_{k}+S^id^i_{1,k},\label{eq:system_2_output1} \\
    z^i_{2,k} &= C_2^ix_k+D_2^iv^i_{k}, \label{eq:system_2_output2}\\
    d^i_{k} &= V^{i}_{1}d_{1,k}+V^{i}_{2}d_{2,k}.\label{eq:system_2_input}
  \end{align}
\end{subequations}
To increase readability, the details of the transformation and how to
compute $V^i_1$, $V^i_2$, $G^i_1$, $G^i_2$, $C^i_1$, $C^i_2$, $D^i_1$,
and $D^i_2$ are provided in Appendix A. We
further define $M^i_1 \triangleq (S^i)^{-1}$ \kj{and the concatenated
  noise vector
  $\eta^i_k \triangleq [v^{i\top}_{k-1} \ w^{\top}_{k-1} \
  v^{i\top}_{k}]^\top$ with upper and lower bounds:
\begin{align*}
  \underline{\eta}^i \triangleq [\underline{v}^{i\top} \ \underline{w}^{\top} \ \underline{v}^{i\top}]^\top \ \text{and} \
  \overline{\eta}^i \triangleq [\overline{v}^{i\top} \ \overline{w}^{\top} \ \overline{v}^{i\top}]^\top.
\end{align*}}%
\scott{To obtain input and state estimates, we require that each agent
  has access to adequate measurements which are not compromised by the
  unknown input. The following assumption ensures this.  We refer the
  reader to \cite[Lemma 1]{yong2018simultaneous} for a discussion on
  the necessity of this type of requirement in centralized estimation
  algorithms subject to adversaries. In a nutshell, unless this type
  of observability conditions hold, an adversary can arbitrarily drive
  system components around in a fully stealthy manner.}
\begin{assumption}\label{ass:rank}
  $C^i_2G^i_2$ is full column rank for every $i \in \nodes$.%
  \footnote{\scott{To relax this condition, nodes can receive
      measurements from neighbors so that the assumption holds for the
      concatenated observation matrices.}}
  Hence, there exists $M^i_2=(C^i_2G^i_2)^\dagger$ such that
  $M^i_2C^i_2G^i_2=I$.
\end{assumption}
\scott{It is worth noting that $C^i_2$ and $G_2^i$ are both affine
  transformations of $C$ and $G$, respectively. Moreover, adequacy of
  measurements plays an important role when applying a singular value
  decomposition on the direct feedthrough matrix in the output
  equation (cf. Appendix A for more details). }
\kj{W}e are ready to propose our \kj{recursive} distributed
simultaneous state
and unknown input observer.

\subsection{Distributed Input and State Framer and its Correctness}
\label{sec:obsv}
To address Problem~\ref{prob:SISIO}, we propose a four-step
procedure, summarized as
{\color{black} the \dsisoalgo (\dsiso)} in Algorithm~\ref{alg1}.
\noindent\textbf{\textit{ i) State Propagation and Measurement
    Update:}}
Given $\underline{x}^i_{k}$, $\overline{x}^i_{k}$, $z^i_{1,k}$,
$z^i_{2,k}$, and $z^i_{2,k+1}$, each agent~$i \in \nodes$ performs a
state propagation and a local measurement update step using
to-be-designed observer gains $L^i$ and
$\Gamma^i \in \real^{n \times {m^i}}$:
%
\begin{gather}\label{eq:observer}
  \hspace{-.1cm}
  \begin{split}
    \underline{x}^{i,0}_{k+1}
    & =  \mathtt{A}^{i+}\,\underline{x}_k^i
    -  \mathtt{A}^{i-}\overline{x}_k^i
    + \mathtt{L}^{i+}\underline{\eta}^i
    - \mathtt{L}^{i-}\overline{\eta}^i+  \Psi^i\xi^i_{k+1}, \\
    \overline{x}^{i0}_{k+1}
    & =  \mathtt{A}^{i+}\overline{x}_k^i
    -  \mathtt{A}^{i-}\underline{x}_k^i
    + \mathtt{L}^{i+}\overline{\eta}^i
    - \mathtt{L}^{i-}\underline{\eta}^i+  \Psi^i \xi^i_{k+1},
  \end{split}
\end{gather}
\scott{where $\mathtt{A}^{i}$, $\mathtt{L}^{i}$, and $\Psi^i$, which
define the observer dynamics, depend linearly on $L^i$ and $\Gamma^i$
and are described in Appendix~B, and $\xi^i_{k+1} \triangleq
  \begin{bmatrix}(z^i_{1,k})^\top & (z^i_{2,k})^\top &
(z^i_{2,k+1})^\top \end{bmatrix}^\top.$ We also note that
$\mathtt{A}^{i}$ has the form $\mathtt{A}^{i} \triangleq T^i\hat{A}^i
- L^iC_2^i$, where $T_i = I-\Gamma^iC_2^i$, and $\hat{A}^i$ depends
only on parameters of $\mathcal{P}.$}

\noindent\textbf{\textit{ ii) State Framer Network Update:}}
Each agent~$i$ shares its local state framers with
its neighbors in the network, \mmo{updating them by} taking the
tightest interval from all neighbors via intersection,
\begin{align}\label{eq:state_network_update}
  \underline{x}^{i}_{k} = \max_{j\in\neighbors_i}\underline{x}^{j,0}_k, \quad
  \overline{x}^{i}_{k} = \min_{j\in\neighbors_i}\overline{x}^{j,0}_k.
\end{align}
\scott{
We consider only one iteration of this operation for simplicity; the extension to
multiple iterations is straightforward.}

\noindent\textbf{\textit{ iii) Input Estimation:}}
Given the state framers \eqref{eq:state_network_update},
agent~$i \in \nodes$ leverages the state dynamics and its measurement
of the system to compute local input framers as follows:
\begin{align}\label{eq:input_framers}
  \hspace{-.5em}\begin{array}{rl}
    \underline{d}^{i,0}_k &= \mathtt{A}^{i+}_d \,\underline{x}^i_k
                            - \mathtt{A}_d^{i-}\overline{x}^i_k
                            + \mathtt{F}^{i+}\underline{\eta}^i
                            - \mathtt{F}^{i-}\overline{\eta}^i
                            + \Phi^i\xi^i_{k+1},\\
    \overline{d}^{i,0}_k &= \mathtt{A}_d^{i+}\overline{x}^i_k
                           - \mathtt{A}_d^{i-}\underline{x}^i_k
                           + \mathtt{F}^{i+}\overline{\eta}^i
                           - \mathtt{F}^{i-}\underline{\eta}^i
                           + \Phi^i\xi^i_{k+1}.
  \end{array}
\end{align}
\scott{where $\mathtt{A}_d^i$, $\mathtt{F}^i$, and $\Phi^i$ are described in Appendix B}.

\noindent\textbf{\textit{ iv) Input Framer Network Update:}}
Finally, each agent~$i$ shares its local input
framers with its neighbors in the network, again taking the intersection,
\begin{align}\label{eq:network_update}
    \begin{split}
      \underline{d}^{i}_{k} = \max_{j\in\neighbors_i}\underline{d}^{j,0}_k, \quad
      \overline{d}^{i}_{k} = \min_{j\in\neighbors_i}\overline{d}^{j,0}_k,
    \end{split}
\end{align}
\begin{rem}
  \scott{\rm There are many existing centralized interval observer
    designs in the literature that could potentially be used for step
    i). However, most of these methods rely on similarity
    transformations \cite{mazenc2014interval} which depend on the
    observation matrices $C^i$. In a multi-agent setting, use of these
    methods necessitates transforming to and from the original
    coordinates whenever estimates are shared over the network. Each
    repeated transformation incurs the so-called ``wrapping effect''
    as a result of Proposition~\ref{prop:bounding}, which worsens the
    estimation error and negates the benefit of exchanging information
    over the network. We avoid this with the choice of
    \eqref{eq:observer}, which is computed directly in the original
    coordinates.}
\end{rem}
\begin{algorithm}[]
  \setstretch{1.1}
  \caption{\dsiso \, at node $i$.}
  \label{alg1}
  \begin{algorithmic}[1]
    \renewcommand{\algorithmicrequire}{\textbf{Input:}}
    \renewcommand{\algorithmicensure}{\textbf{Output:}}
    \Require 
    $\underline{x}^i_0$, $\overline{x}^i_0$,; \textbf{Output:}
    $\{\underline{x}^i_{k},\overline{x}^i_{k},\underline{d}^i_{k},\overline{d}^i_{k}\}_{k\ge0}$;
    \State Compute $L^i$ $\Gamma^i$,
    and $T^i$ by solving \eqref{eq:stab_LMI};
    \State $k \gets 1$
    \Loop
      \Statex \scott{$\triangleright$ \textbf{State propagation and measurement update}}
      \State Compute $\underline{x}^{i,0}_{k}$ and
      $\overline{x}^{i,0}_{k}$ using \eqref{eq:observer};
      \Statex \scott{$\triangleright$ \textbf{State framer network update}}
      \State Send $\underline{x}^{i,0}_{k}$ and
      $\overline{x}^{i,0}_{k}$ to $\{j \ : \ i \in \neighbors_j\}$;
      \State Receive $\underline{x}^{j,0}_{k}$ and
      $\overline{x}^{j,0}_{k}$ from $j \in \neighbors_i$;
      \State $\displaystyle\underline{x}^{i}_{k} \gets\,
      \max_{j\in\neighbors_i}\underline{x}^{j,0}_{k}; \quad
      \overline{x}^{i}_{k} \gets\, \min_{j\in\neighbors_i}\overline{x}^{j,0}_{k};$
      \Statex \scott{$\triangleright$ \textbf{Input framer estimation}}
      \State Compute $\underline{d}^{i,0}_{k}$ and
      $\overline{d}^{i,0}_{k}$ using \eqref{eq:input_framers};
      \Statex \scott{$\triangleright$ \textbf{Input framer network update}}
      \State Send $\underline{d}^{i,0}_{k}$ and
      $\overline{d}^{i,0}_{k}$ to $\{j \ : \ i \in \neighbors_j\}$;
      \State Receive $\underline{d}^{j,0}_{k}$ and
      $\overline{d}^{j,0}_{k}$ from $j \in \neighbors_i$;
      \State     $\displaystyle\underline{d}^{i}_{k} \gets\,
      \max_{j\in\neighbors_i}\underline{d}^{j,0}_{k}; \quad
      \overline{d}^{i}_{k} \gets\, \min_{j\in\neighbors_i}\overline{d}^{j,0}_{k};$
      \State $k \gets k+1$;
    \EndLoop
    \Return $\{\underline{x}^i_{k},\overline{x}^i_{k},\underline{d}^i_{k},\overline{d}^i_{k}\}_{k\ge0}$
  \end{algorithmic}
\end{algorithm}
\begin{lem}[Distributed Framer Construction]
  \label{lem:ind_framer}
  The \dsiso\ algorithm outputs interval state and input framers for
  $\mathcal{P}$.
\end{lem}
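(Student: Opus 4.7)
The plan is to prove the lemma by induction on $k\geq 0$, establishing simultaneously that $\underline{x}^i_k \leq x_k \leq \overline{x}^i_k$ and $\underline{d}^i_k \leq d_k \leq \overline{d}^i_k$ for every agent $i \in \nodes$. The base case is immediate from the hypothesis that the initial interval $\interval_x$ containing $x_0$ is known to all agents.

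The key ingredient for the inductive step is an exact identity of the form
\begin{align*}
x_{k+1} = \mathtt{A}^i x_k + \mathtt{L}^i \eta^i_k + \Psi^i \xi^i_{k+1},
\end{align*}
which I expect to derive by manipulating the transformed system \eqref{eq:system2}. Specifically, $d^i_{1,k}$ can be solved from \eqref{eq:system_2_output1} using $M^i_1 = (S^i)^{-1}$, while $d^i_{2,k}$ can be recovered from the one-step-ahead measurement $z^i_{2,k+1}$ in \eqref{eq:system_2_output2} via Assumption~\ref{ass:rank} (since $M^i_2 C^i_2 G^i_2 = I$), which also expresses $C^i_2 x_{k+1}$ through the state propagation. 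Substituting both expressions into \eqref{eq:system_2_state}, and collecting terms using $T^i = I-\Gamma^i C^i_2$ together with the Appendix~B definitions, recovers the claimed identity. The observer equation \eqref{eq:observer} is then exactly the enclosure produced by applying Proposition~\ref{prop:bounding} term-by-term: given $\underline{x}^i_k \leq x_k \leq \overline{x}^i_k$ (induction hypothesis) and $\underline{\eta}^i \leq \eta^i_k \leq \overline{\eta}^i$ (noise bounds), and noting that $\Psi^i \xi^i_{k+1}$ is a measured quantity common to both bounds, we obtain $\underline{x}^{i,0}_{k+1} \leq x_{k+1} \leq \overline{x}^{i,0}_{k+1}$.

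The state framer network update \eqref{eq:state_network_update} preserves the enclosure: since each $\underline{x}^{j,0}_{k+1} \leq x_{k+1} \leq \overline{x}^{j,0}_{k+1}$ for $j \in \neighbors_i$, taking the element-wise maximum of lower bounds still yields a lower bound and the element-wise minimum of upper bounds still yields an upper bound, which is just the statement that intersecting intervals containing $x_{k+1}$ preserves containment. An analogous argument handles the input framers: the plan is to derive an identity $d_k = \mathtt{A}_d^i x_k + \mathtt{F}^i \eta^i_k + \Phi^i \xi^i_{k+1}$ from \eqref{eq:system_2_output1}--\eqref{eq:system_2_input} using the same inversions ($S^i$ for $d^i_{1,k}$, $(C^i_2 G^i_2)^\dagger$ for $d^i_{2,k}$), apply Proposition~\ref{prop:bounding} to reproduce \eqref{eq:input_framers}, and conclude by noting that the intersection step \eqref{eq:network_update} preserves the enclosure of $d_k$ by the same lattice argument.

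The main obstacle is the algebraic verification that the coefficient matrices $\mathtt{A}^i, \mathtt{L}^i, \Psi^i, \mathtt{A}_d^i, \mathtt{F}^i, \Phi^i$ specified in Appendix~B indeed coincide with those produced by the substitutions sketched above; this is routine bookkeeping but relies crucially on the SVD factorization of $H^i$ (to define $S^i, V^i_1, V^i_2$), on Assumption~\ref{ass:rank} (to justify $M^i_2$), and on the choice $T^i = I-\Gamma^i C^i_2$. Once these identities are in hand, the framer property follows mechanically from Proposition~\ref{prop:bounding} applied in the original coordinates and from the fact that intersection of intervals preserves membership, so no fixed-point or contraction argument is needed for correctness of the framers themselves.
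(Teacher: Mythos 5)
Your proposal follows essentially the same route as the paper's proof in Appendix~B: solve for $d^i_{1,k}$ via $M^i_1=(S^i)^{-1}$, recover $d^i_{2,k}$ from $z^i_{2,k+1}$ using Assumption~\ref{ass:rank}, substitute into the dynamics and use $T^i=I-\Gamma^iC^i_2$ (plus the injected zero term involving $L^i$) to obtain the exact identity $x_{k+1}=\mathtt{A}^ix_k+\mathtt{L}^i\eta^i_{k+1}+\Psi^i\xi^i_{k+1}$, then apply Proposition~\ref{prop:bounding}, note that the intersection steps preserve containment, and close by induction. The argument is correct; the only cosmetic discrepancy is the time index on the noise vector ($\eta^i_{k+1}$ rather than $\eta^i_k$), which is immaterial since the bounds $\underline{\eta}^i,\overline{\eta}^i$ are time-invariant.
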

\begin{proof}
  \scott{See Appendix B.}
\end{proof}
\subsection{Distributed Stabilizing and Error Minimization}\label{subsec:CISS}
In this subsection, we investigate conditions on the observer gains
$L^i$ and $\Gamma^i$, as well as the communication graph
$\graph$, that lead to an \emph{input-to-state stable} (ISS)\footnote{The reader is referred to \cite[Definition 2]{khajenejad2022distributed} for a detailed description of an ISS interval observer.} distributed observer, which equivalently
results in a uniformly bounded observer error sequence given in
\eqref{eq:error_1}--\eqref{eq:error_2}, in the presence of bounded
noise. To guarantee stability, we use the following assumption on the agents' observation matrices and the structure of the network graph.
\begin{assumption}[Collective Positive Detectability over
  Neighborhoods (CPDN) \cite{khajenejad2022distributed}]\label{ass:col_pos_det_neigh}
  For every state dimension $s \in \until{n}$ and every agent
  $i\in\nodes$, there is an agent $\ell(i,s)\in\neighbors_i$ such that
  there exist gains $L^{\ell(i,s)}$, and
  $\Gamma^{\ell(i,s)}$ satisfying
  \begin{align*}
  \|(T^{\ell(i,s)}\hat{A}^{\ell(i,s)}-L^{\ell(i,s)}C_2^{\ell(i,s)})_s\|_1 < 1.
  \end{align*}
\end{assumption}
\scott{Intuitively, this assumption narrows the problem of stability
  to subgraphs. Within these subgraphs, we require that for each state
  dimension $s$, there is a node that, given estimates of all other
  state dimensions $\{1,\dots,s-1,s+1,\dots,n\}$, can compute an
  accurate estimate of dimension~$s$. With this assumption in mind, we
  propose a two-step process to design the observer gains $L^i$ and
  $\Gamma^i$. First, each node executes a procedure
  (Algorithm~\ref{alg2}) which \emph{verifies}
  Assumption~\ref{ass:col_pos_det_neigh}, returning false if
  it is not satisfied, or else \emph{computes} some feasible
  stabilizing gains and the set of state dimensions which a node can
  contribute to estimating, i.e.,
\begin{align*}
  \mathbb{J}^i \triangleq \{s \ : \ \exists T^i, L^i, \Gamma^i \text{ s.t. } \|(T^i\mathtt{A}^{i} - L^iC_2^i)_s \|_1 < 1 \}.
\end{align*}
Then, given this information, each node solves the LP in
\eqref{eq:stab_LMI}, which simultaneously \emph{guarantees stability}
and minimizes an upper bound on the observer error.} This design
process improves on the one proposed in
\cite{khajenejad2022distributed} in the absence of attacks, by first
identifying ``stabilizing" agents for each state dimension, then
minimizing an upper bound on the error while enforcing the stability
condition. In this way, the design includes a sense of noise/error
attenuation. \kj{The following theorem formalizes our main results on
how to tractably synthesize stabilizing and error minimizing observer
gains in a distributed manner. }
\begin{thm}[Distributed Input and State Interval Observer Design]
  \label{thm:suff_stability}
  {Suppose} Assumptions \ref{ass:rank} and \ref{ass:col_pos_det_neigh} hold and $L^i_*$, $T^i_*$, and $\Gamma^i_*$ are solutions to the \kj{following convex} program:
    \begin{align}\label{eq:stab_LMI}
      \begin{array}{rl}
        \min\limits_{Z^i,L^i,T^i,\Gamma^i} &\quad
         \||\mathtt{L}^{i}|(\overline{\eta}^i-\underline{\eta}^i)\|_\infty \\
        \mathrm{s.t.}\quad &\quad \hspace{-.3cm}T^i = I - \Gamma^iC_2^i,\\
        &\quad \hspace{-.3cm} \sum_{t=1}^n Z^i_{jt} < 1, \ \forall j \in \mathbb{J}^i,\\
       &\quad -Z^i \leq T^i\hat A^i-L^iC_2^i \leq Z^i,\\
      \end{array}
    \end{align}
      where $\mathtt{L}^{i}$ is defined in \eqref{eq:observer} and $\mathbb{J}^i$ is calculated using Algorithm \ref{alg2}. 
    Then, the \kj{\dsiso} algorithm, i.e., \eqref{eq:observer}--\eqref{eq:network_update}, with the corresponding observer gains $L^i_*, T^i_*, \Gamma^i_*$
    constructs an ISS distributed input and state interval observer. Moreover, the steady state observer errors are guaranteed to be bounded:
  \begin{align}\label{eq:ss_upperbounds}
    \hspace{-.5cm}
    \begin{array}{rl}
    \|e^x_{k}\|_{\infty} &\leq \frac{1}{1-\rho_*}
                             \max_{i}
                             \||\mathtt{L}^{i}|
                             \Delta_{\eta}^i\|_\infty, \\
                               \|\delta_k\|_\infty &\leq \frac{\rho(\mathcal{A}_d)}{1-\rho_*} \max_{i}
                             \||\mathtt{L}^{i}|
                             \Delta_{\eta}^i\|_\infty + \max_i \||\mathtt{F}^{i}|
                         \Delta_{\eta}^i\|_\infty,
                         \end{array}
    \end{align}
where $\rho_*$, $\mathcal{A}_d$, and $\Delta_{\eta}^i$ are given in Lemma \ref{lem:errors}.
\end{thm}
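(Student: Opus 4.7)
The plan is to reduce the theorem to three technical tasks---correctness of the framers, stability of the collective error, and explicit steady-state bounds---and handle them in turn. Correctness is immediate from Lemma~\ref{lem:ind_framer}, which guarantees that the DSISO outputs are valid state and input framers regardless of the gains chosen. This reduces the remaining work to proving input-to-state stability of the collective framer error $e_k$ in \eqref{eq:error_2} (and its input counterpart $\delta_k$) and to extracting the specific $\infty$-norm bounds in \eqref{eq:ss_upperbounds}.

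For the stability step, I would derive an elementwise linear comparison recursion for $e_k$. Subtracting the true dynamics from \eqref{eq:observer} and applying Proposition~\ref{prop:bounding} to separate each linear combination into its positive and negative parts yields a pre-intersection bound
\[
\begin{bmatrix}\underline{e}^{i,0}_{k+1} \\ \overline{e}^{i,0}_{k+1}\end{bmatrix}
\leq |\mathtt{A}^i|\begin{bmatrix}\underline{e}^i_k \\ \overline{e}^i_k\end{bmatrix} + |\mathtt{L}^i|\Delta_\eta^i,
\]
with $\Delta_\eta^i \triangleq \overline{\eta}^i - \underline{\eta}^i$. Because the intersection in \eqref{eq:state_network_update} is componentwise and can only tighten the framer, the $s$-th coordinate of the post-intersection error at node $i$ is dominated, row-by-row, by the pre-intersection error of the stabilizing neighbor $\ell(i,s)$ guaranteed by Assumption~\ref{ass:col_pos_det_neigh}. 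Stacking these row-wise selections across all nodes and state dimensions produces a collective comparison system $e_{k+1} \leq \mathcal{A}\, e_k + \mathcal{B}$, where $\mathcal{A}$ is assembled from selected rows of $|\mathtt{A}^{\ell(i,s)}|$. The LP constraints $-Z^i \leq T^i\hat{A}^i - L^iC_2^i \leq Z^i$ and $\sum_t Z^i_{jt} < 1$ for $j \in \mathbb{J}^i$ (with $\mathbb{J}^i$ computed by Algorithm~\ref{alg2}) then guarantee strict $\ell_1$ row-sum bounds on the selected rows, so that $\|\mathcal{A}\|_\infty < 1$ and $\rho_* \triangleq \rho(\mathcal{A}) < 1$.

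From here, ISS and the first bound in \eqref{eq:ss_upperbounds} follow by iterating the comparison system in the $\infty$-norm and summing the resulting geometric series; the LP objective $\||\mathtt{L}^i|(\overline{\eta}^i - \underline{\eta}^i)\|_\infty$ makes the per-node constant $\||\mathtt{L}^i|\Delta_\eta^i\|_\infty$ as tight as possible under the stability constraint. An analogous application of Proposition~\ref{prop:bounding} to the input framer \eqref{eq:input_framers} followed by the intersection \eqref{eq:network_update} yields a one-step bound $\delta_k \leq \mathcal{A}_d\, e_k + $ the noise contribution built from $|\mathtt{F}^i|\Delta_\eta^i$; substituting the already-established bound on $e_k$ recovers the second bound in \eqref{eq:ss_upperbounds}. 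The main obstacle I anticipate is formalizing the row-wise selection argument: the intersection operation acts componentwise on the interval endpoints, so to assemble a single nonnegative matrix $\mathcal{A}$ with $\|\mathcal{A}\|_\infty < 1$ one must carefully justify replacing each row of each node's post-intersection error by the corresponding row from a CPDN-stabilizing neighbor, while keeping the resulting block structure compatible with the monotone dynamics induced by $|\mathtt{A}^{\ell(i,s)}|$ and avoiding any hidden coupling that could break either nonnegativity or the strict row-sum bound.
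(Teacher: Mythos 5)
Your proposal is correct and follows essentially the same route as the paper: it reconstructs, inline, the content of Lemma~\ref{lem:errors} (the positive linear comparison system $e_{k+1} \leq \mathcal{B}\mathcal{A}_x e_k + \mathcal{B}\gamma^x_k$ with a selection matrix $\mathcal{B}\in\mathcal{M}$ encoding the intersection step) and Lemma~\ref{lem:stability} (CPDN plus the LP row-sum constraints on $Z^i$ yielding a stabilizing $\mathcal{B}_*$ with $\rho(\mathcal{B}_*\mathcal{A}_x)<1$), and then obtains \eqref{eq:ss_upperbounds} by the same geometric-series summation and the one-step relation $\delta_k \leq \mathcal{A}_d e_k + \gamma^d_k$. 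The "row-wise selection" construction you flag as the main obstacle is precisely the switching-matrix formalism $\mathcal{M}$, $\mathcal{B}_*$ that the paper (citing its Theorem~2 of the prior work) uses, so no new idea is missing.
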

\kj{In order to prove our main results in Theorem
  \ref{thm:suff_stability}, we need to first take two intermediate
  steps on i) providing closed form expressions for the observer
  errors and their upper bounds, and ii) ensuring the existence of
  stabilizing gains, stated via Lemmas \ref{lem:errors} and
  \ref{lem:stability}, respectively.}  \scott{To begin, we note that
equations \eqref{eq:observer}-\eqref{eq:network_update} result in a
\emph{switched} linear system, with the following set of possible
switching signals:
\begin{align*}
  \mathcal{M} \triangleq \left\{M \in \{0,1\}^{2Nn\times2Nn} \ :
  \begin{array}{c} M_{ij} = 0, \forall j \notin \neighbors_i \\
    \sum_{k=1}^{2Nn} M_{ik} = 1 \end{array} \right\},
\end{align*}
which encodes all possible permutations of the operation \eqref{eq:state_network_update}.}
\begin{lem}[Error Bounds]\label{lem:errors}
  \scott{For all $\mathcal{B}\in \mathcal{M}$}, the errors of the {\dsiso} observer
  \eqref{eq:observer}--\eqref{eq:network_update} are upper
  bounded as follows:
    \begin{align}\label{eq:error_upper_bounds}
    \begin{array}{rl}
      \|e^x_{k}\|_{\infty} &\leq \|e_{0}\|_{\infty}\rho_*^k
                             + \frac{1-\rho_*^k}{1-\rho_*}
                             \max_{i}
                             \||\mathtt{L}^{i}|
                             \Delta_{\eta}^i\|_\infty, \\
      \|\delta_k\|_\infty &\leq \rho(\mathcal{A}_d)\|e_k\|_\infty + \max_i \||\mathtt{F}^{i}|
                         \Delta_{\eta}^i\|_\infty,
                         \end{array}
    \end{align}
  where
$\rho_* \triangleq \rho(\mathcal{B}\mathcal{A}_x)$, \kj{$\mathcal{A}_s \triangleq \diag{\mathbb{A}^1_s, \dots \mathbb{A}^N_s}$}, $  \mathbb{A}^i_s \triangleq
  \begin{bmatrix}
    \mathtt{A}_s^{i+} & \mathtt{A}_s^{i-} \\
    \mathtt{A}_s^{i-} & \mathtt{A}_s^{i+}
  \end{bmatrix},$ $\mathtt{A}^i_x \triangleq \mathtt{A}^i$, $s \in \{x,d\}$, and
  $\Delta_{\eta}^i \triangleq
\overline{\eta}^i-\underline{\eta}^i.$
%
\end{lem}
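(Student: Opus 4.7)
The plan is to derive, for each agent $i$, a one-step upper bound on the pre-intersection local errors, stack them into a collective bound of the form $e^{0}_{k+1} \leq \mathcal{A}_x e_k + r$, use that the network intersection step is a componentwise minimum on the errors (hence representable by a selection matrix $\mathcal{B} \in \mathcal{M}$), and then iterate to get the geometric-sum bound. The input bound will follow from the same machinery applied as a single algebraic step.

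First I would establish the pre-intersection local state error dynamics. Writing $\mathtt{A}^i = \mathtt{A}^{i+} - \mathtt{A}^{i-}$ and $\mathtt{L}^i = \mathtt{L}^{i+} - \mathtt{L}^{i-}$, and appealing to Appendix~B, where $\mathtt{A}^i$, $\mathtt{L}^i$, $\Psi^i$ are constructed so that the true state obeys exactly the same relation with the true noise $\eta^i_k$ in place of its interval endpoints, subtraction with \eqref{eq:observer} cancels the $\Psi^i\xi^i_{k+1}$ term and yields
\begin{align*}
\underline{e}^{i,0}_{k+1} &= \mathtt{A}^{i+}\underline{e}^i_k + \mathtt{A}^{i-}\overline{e}^i_k + \mathtt{L}^{i+}(\eta^i_k-\underline{\eta}^i) + \mathtt{L}^{i-}(\overline{\eta}^i-\eta^i_k),
\end{align*}
and a symmetric expression for $\overline{e}^{i,0}_{k+1}$. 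Since each noise deviation lies in $[0,\Delta_\eta^i]$, the noise contribution is bounded elementwise by $|\mathtt{L}^i|\Delta_\eta^i$. Stacking the local errors in the node-interleaved format of \eqref{eq:error_2} exposes exactly the block structure $\mathbb{A}^i_x$, so before the intersection step $e^{0}_{k+1} \leq \mathcal{A}_x e_k + r$ with $\|r\|_\infty \leq \max_i\||\mathtt{L}^i|\Delta_\eta^i\|_\infty$.

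Next I would fold in the intersection step \eqref{eq:state_network_update}. The $\max$ over neighbors of the lower framers and the $\min$ of the upper framers translate, after subtracting $x_k$, to a componentwise $\min$ on the errors, so there exists a $\{0,1\}$-valued row-stochastic matrix $\mathcal{B}\in\mathcal{M}$ with $e_k \leq \mathcal{B}\,e^{0}_k$ and $\|\mathcal{B}\,r\|_\infty \leq \|r\|_\infty$. Composing yields $e_{k+1} \leq (\mathcal{B}\mathcal{A}_x)\,e_k + \mathcal{B}\,r$ for some (possibly time-varying) $\mathcal{B}\in\mathcal{M}$; passing to $\infty$-norms, using $\rho_* = \rho(\mathcal{B}\mathcal{A}_x)$, and iterating the scalar recursion produces the first line of \eqref{eq:error_upper_bounds} as a geometric sum.

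The input bound is a one-step argument: applying the same sign decomposition to \eqref{eq:input_framers} and using that $d_k$ satisfies the corresponding algebraic relation built from $\mathtt{A}_d^i, \mathtt{F}^i, \Phi^i$ in Appendix~B, the pre-intersection input error is upper bounded by an $\mathbb{A}^i_d$-block term acting on $e_k$ plus $|\mathtt{F}^i|\Delta_\eta^i$; maximizing over $i$, invoking $\rho(\mathcal{A}_d)$, and noting that the intersection \eqref{eq:network_update} can only tighten the interval gives the stated inequality. The main obstacle I anticipate is the bookkeeping at the outset: verifying from the Appendix~B definitions that the \emph{true} state and input satisfy exactly the relations their framers mimic, so that subtraction cleanly isolates the sign-split error terms. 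Once that identity is in hand, everything else is routine interval arithmetic along the lines of Proposition~\ref{prop:bounding}, together with the standard geometric-sum iteration.
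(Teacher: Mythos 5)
Your proposal is correct and follows essentially the same route as the paper: both derive the positive linear comparison system $e_{k+1}\leq \mathcal{B}\mathcal{A}_x e_k + \mathcal{B}\gamma^x_k$ (with the intersection step encoded by a selection matrix $\mathcal{B}\in\mathcal{M}$), iterate it into a geometric sum, bound the noise contribution by $\max_i\||\mathtt{L}^i|\Delta_\eta^i\|_\infty$, and treat the input bound as a single algebraic step with $\mathcal{A}_d$ and $|\mathtt{F}^i|\Delta_\eta^i$. The only difference is that you spell out the subtraction argument behind the comparison system, which the paper delegates to the cited proof of \cite[Theorem 1]{khajenejad2022distributed}; note also that the componentwise $\min$/$\max$ in \eqref{eq:state_network_update} yields the bound for \emph{every} $\mathcal{B}\in\mathcal{M}$, not just some $\mathcal{B}$, which is the form the lemma states and later needs.
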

\begin{proof}
  \kj{Starting from \eqref{eq:observer}-\eqref{eq:network_update}}
  and following the lines of the proof of \cite[Theorem
  1]{khajenejad2022distributed}, \scott{for any $\mathcal{B} \in \mathcal{M}$}, the framer error\kj{s} can be bounded
  by the positive linear comparison system
  \begin{align}\label{eq:comparison}
    e_{k+1} \leq \mathcal{B}\mathcal{A}_xe_k +
    \mathcal{B}\gamma^x_k, \quad
    \delta_{k} \leq  \mathcal{A}_d e_{k} + \gamma^d_{k},
  \end{align}
  where $\forall s \in \{x,d\}: \gamma^s_k \triangleq [(\lambda^{1}_{s,k})^\top\dots \ (\lambda^{N}_{s,k})^\top]^\top$, $\lambda_{s,k}^i \triangleq
  \begin{bmatrix}
    ((\Xi^i_s)^+\underline{e}^i_{\eta,k}
    + (\Xi^i_s)^-\overline{e}^i_{\eta,k})^\top &
    ((\Xi^i_s)^-\underline{e}^i_{\eta,k}
    + (\Xi^i_s)^+\overline{e}^i_{\eta,k})^\top
  \end{bmatrix}^\top$, $\Xi^i_x \triangleq \mathtt{L}^{i},\Xi^i_d
  \triangleq \mathtt{F}^{i}$, $\underline{e}^i_{\eta,k} \triangleq
  \eta^i_k-\underline{\eta}^i_k$ and $\overline{e}^i_{\eta,k}
  \triangleq
  \overline{\eta}^i_k-\eta^i_k.$ \kj{Moreover,} It follows from the
  solution of
  \eqref{eq:comparison} \kj{that}
  \begin{align}\label{eq:comparison_bounding}
    e_k \leq (\mathcal{B}\mathcal{A}_x)^{k-1} e_0
    + \sum_{s=1}^{k-1}(\mathcal{B}\mathcal{A}_x)^{k-s}\gamma_{s-1}.
  \end{align}
  Further,
  $\|\gamma^s_k\|_{\infty} \leq
  \max_{i}|\Xi^i_s|(\overline{\eta}^i -
  \underline{\eta}^i)$ by non-negativity of $(\Xi^i_s)^+$,
  $(\Xi^i_s)^-$, and
  $\overline{\eta}^i - \underline{\eta}^i$. The result follows
  from \eqref{eq:comparison}, \eqref{eq:comparison_bounding}, sub-multiplicativity of norms and the triangle
  inequality.
\end{proof}
\begin{lem}
  \label{lem:stability}
  If Assumption 2 holds, then \scott{there exist $L^i$ and $\Gamma^i$ such that, for some $\mathcal{B}_* \in \mathcal{M}$, \kj{$\mathcal{B}_*\mathcal{A}_x$ is Schur stable, i.e., $\rho(\mathcal{B}_*\mathcal{A}_x) < 1$}. Consequently,} the {\dsiso} algorithm is ISS.
\end{lem}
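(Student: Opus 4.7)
The plan is to leverage Assumption~\ref{ass:col_pos_det_neigh} to explicitly construct a selection matrix $\mathcal{B}_* \in \mathcal{M}$ that implements, for each agent $i$ and each state dimension $s$, the network update in \eqref{eq:state_network_update} by picking the ``stabilizing neighbor'' $\ell(i,s)$ whose corresponding row of $\mathtt{A}^{\ell(i,s)} = T^{\ell(i,s)}\hat{A}^{\ell(i,s)} - L^{\ell(i,s)}C_2^{\ell(i,s)}$ satisfies the row-sum bound of CPDN. The gains $L^{\ell(i,s)}, \Gamma^{\ell(i,s)}$ are precisely those guaranteed by Assumption~\ref{ass:col_pos_det_neigh}; since the $\min$ operation in \eqref{eq:state_network_update} dominates any particular neighbor's choice, the comparison system in \eqref{eq:comparison} holds for every $\mathcal{B} \in \mathcal{M}$, giving us the freedom to choose $\mathcal{B}_*$ favorably for the stability argument.

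Next, I would show that $\mathcal{B}_* \mathcal{A}_x$ has every row with $\ell_1$-norm strictly less than $1$. Recall that $\mathcal{A}_x = \diag{\mathbb{A}^1_x,\dots,\mathbb{A}^N_x}$ with each block in the nonnegative form $\mathbb{A}^i_x = \bigl[\begin{smallmatrix} \mathtt{A}^{i+} & \mathtt{A}^{i-} \\ \mathtt{A}^{i-} & \mathtt{A}^{i+} \end{smallmatrix}\bigr]$. For the two rows of $\mathcal{B}_* \mathcal{A}_x$ associated with the $s$-th components of $\underline{e}^i_k$ and $\overline{e}^i_k$ in the collective vector $e_k$, the construction selects the $s$-th row of $\mathbb{A}^{\ell(i,s)}_x$, whose $\ell_1$-norm equals $\|\mathtt{A}^{\ell(i,s)+}_s\|_1 + \|\mathtt{A}^{\ell(i,s)-}_s\|_1 = \||\mathtt{A}^{\ell(i,s)}_s|\|_1 = \|\mathtt{A}^{\ell(i,s)}_s\|_1$. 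By Assumption~\ref{ass:col_pos_det_neigh}, this quantity is strictly less than $1$.

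Since every row of the nonnegative matrix $\mathcal{B}_*\mathcal{A}_x$ has $\ell_1$-norm less than $1$, we get $\|\mathcal{B}_*\mathcal{A}_x\|_\infty < 1$, and therefore $\rho_* = \rho(\mathcal{B}_*\mathcal{A}_x) \leq \|\mathcal{B}_*\mathcal{A}_x\|_\infty < 1$, establishing Schur stability. ISS of the \dsiso\ algorithm then follows directly from Lemma~\ref{lem:errors}: with $\rho_* < 1$, the bound in \eqref{eq:error_upper_bounds} shows $\|e^x_k\|_\infty$ decays geometrically from any initial error to a ball whose radius is proportional to the noise widths $\Delta^i_\eta$, and the corresponding bound on $\|\delta_k\|_\infty$ inherits uniform boundedness through $\|e_k\|_\infty$.

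The main obstacle I anticipate is the careful bookkeeping between the per-node, per-dimension indexing used in CPDN and the block structure of $\mathcal{A}_x$ and $\mathcal{M}$, specifically verifying that a single valid $\mathcal{B}_* \in \mathcal{M}$ can simultaneously encode the choice $\ell(i,s)$ for \emph{all} pairs $(i,s)$ without violating the feasibility constraints $M_{ij}=0$ for $j \notin \neighbors_i$ and $\sum_k M_{ik}=1$. This is resolved by noting that these constraints act row-wise and independently, so the map $(i,s) \mapsto \ell(i,s) \in \neighbors_i$ yields a well-defined selection on each row without conflicts.
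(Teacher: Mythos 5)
Your proposal is correct and follows essentially the same route as the paper, whose own proof simply defers to \cite[Theorems 1 \& 2]{khajenejad2022distributed}: there, as in your argument, Assumption~\ref{ass:col_pos_det_neigh} is used to build a row-selection matrix $\mathcal{B}_*$ picking the stabilizing neighbor $\ell(i,s)$ for each pair $(i,s)$, the identity $\|(\mathtt{A}^{i+})_s\|_1+\|(\mathtt{A}^{i-})_s\|_1=\|\mathtt{A}^i_s\|_1$ gives $\|\mathcal{B}_*\mathcal{A}_x\|_\infty<1$ and hence $\rho(\mathcal{B}_*\mathcal{A}_x)<1$, and ISS follows from the comparison bounds of Lemma~\ref{lem:errors}. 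Your closing remark on the row-wise independence of the selection constraints is exactly the bookkeeping that makes the cited construction go through (note also that each row of $T^i\hat{A}^i-L^iC_2^i$ depends only on the corresponding rows of $\Gamma^i$ and $L^i$, so per-dimension gains can be assembled into a single gain matrix per node).
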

\begin{proof}
  \kj{It follows from combining \cite[Theorems 1 $\&$ 2]{khajenejad2022distributed}}.
\end{proof}
We are ready to provide a proof for Theorem \ref{thm:suff_stability} as follows.\\
\textit{Proof of Theorem \ref{thm:suff_stability}.}
\scott{By Lemma~\ref{lem:stability}, Assumption 2 implies the
  existence of gains that render the {\dsiso} algorithm ISS.  It
  remains to show that the solutions of \eqref{eq:stab_LMI} are
  stabilizing.  First, notice that Algorithm \ref{alg2} computes
  $\mathbb{J}^i$ \kj{by solving \eqref{eq:agent_identification}.}
  The use of $\mathbb{J}^i$ in the constraints of
  \eqref{eq:stab_LMI} guarantees that the optimization problem is
  feasible. Furthermore, we can show that since
  Assumption~\ref{ass:col_pos_det_neigh} holds, there exists
  $\mathcal{B}_*$ such that
  $\rho(\mathcal{B}_*\mathcal{A}_x) < 1$, and therefore that the
  {\dsiso} algorithm is ISS. We refer the reader to \cite[Theorem
  2]{khajenejad2022distributed} for the construction of
  $\mathcal{B}_*$.} This in combination with Lemmas \ref{lem:errors}
and \ref{lem:stability} ensures that the bounds in
\eqref{eq:error_upper_bounds} converge to their steady state values
in \eqref{eq:ss_upperbounds}. \qed
\begin{rem}{\rm
    It is worth noting that by the following change of variables, the
    convex program in \eqref{thm:suff_stability} can be easily and
    equivalently stated in the form of a linear program (LP):
    \begin{align*}
      \begin{array}{rll}
        \min\limits_{Z^i,L^i,T^i,\Gamma^i,\eta,\theta} &\quad
                                                         \lambda \\
        \mathrm{s.t.}&\quad
                       \theta(\overline{\eta}^i-\underline{\eta}^i) \leq \lambda \mathbf{1}, \quad
                       -\theta \leq \mathtt{L}^{i} \leq \theta,\\
                                                       &\quad T^i = I - \Gamma^iC_2^i, \quad \sum_{t=1}^n Z^i_{jt} < 1, \ \forall j \in \mathbb{J}^i,\\
                                                       &\quad -Z^i \leq T^i\hat A^i-L^iC_2^i \leq Z^i,\\
      \end{array}
    \end{align*}
    which can be considered as complementary to
    $\mathcal{H}_{\infty}$-optimal observer design, e.g., in
    \cite{khajenejad2019simultaneous,khajenejadasimultaneous,khajenejad2022resilient}.}
\end{rem}
\begin{algorithm}
  \setstretch{1.1}
  \caption{{\dsiso} initialization at node $i$.}
  \label{alg2}
  \begin{algorithmic}[1]
    \renewcommand{\algorithmicrequire}{\textbf{Input:}}
    \renewcommand{\algorithmicensure}{\textbf{Output:}}
    \Require $A$, $C^i$, $\neighbors_i$;
    \textbf{Output:} $\mathbb{J}^i$
    \State Compute $L_*^{i}$, $\Gamma_*^{i}$, and $Z_*^{i}$ by solving
    \begin{align}\label{eq:agent_identification}
      \begin{array}{rl}
        \hspace{-.4cm}\textstyle\min_{Z^i,L^i,\Gamma^i} &\quad  \textstyle\sum_{s=1}^n\sum_{t=1}^n (Z^i)_{st} \\
        \mathrm{s.t.}&\  -Z^i \leq (I - \Gamma^iC_2^i)\hat A^i-L^iC_2^i \leq Z^i.
      \end{array}
    \end{align}
    \State $\mathbb{J}^i \gets \{s \ : \ \sum_{t=1}^n(Z_*^i)_{st} < 1\}$;
    \State $\mathcal{Q}_i \gets \{(I - \Gamma^i_*C_2^i)\hat A^i - L_*^{i}C_2^{i}\}$;
    \State Receive $\mathcal{Q}_j$ from $j\in\neighbors_i$;
    \State $\mathcal{Q}_i \gets \bigcup_{j\in\neighbors_i} Q_j$;
    \If{$\forall s \in \until{n}$, $\exists P \in \mathcal{Q}_i$ s.t. $\|(P)_s\|_1 < 1$}
    \Return false (i.e., Assumption~\ref{ass:col_pos_det_neigh} not satisfied)
    \Else
    \Return $\mathbb{J}^i$
    \EndIf
  \end{algorithmic}
\end{algorithm}

\section{Simulation}\label{sec:example}
In this section we demonstrate the {\dsiso} algorithm on an IEEE 14-bus
system \cite{ieee14bus}. We refer the reader to \cite{Pasqualetti.2013} for the derivation of
the LTI representation of the system, which can be discretized and
written in the form of \eqref{eq:system}.  The $n=10$ dimensional
state
$x_k^\top = \begin{bmatrix}\delta_k^\top &
  \omega_k^\top\end{bmatrix}^\top$ represents the rotor angle and
frequency of each of the 5 generators. Each bus in the test case
corresponds to a node in the algorithm. \scott{The noise signals satisfy $\|w_k\|_\infty < 5$ and $\|v^i_k\| < 1\times 10^{-4}$ $\forall i \in \nodes$}.  Similarly to the example in
\cite{Pasqualetti.2013}, each node (bus) measures its own real power
injection/consumption, the real power flow across all branches
connected to the node, and for generating nodes, the rotor angle of
the associated generator.

In this example, we assume that the generator at node 1 is insecure
and potentially subject to attacks. \scott{Due to the reduction
necessary to eliminate the algebraic constraints of the power system
model \cite{Pasqualetti.2013}, the disturbance appears directly in the
measurements of all nodes, resulting in nonzero $H^i$
matrices. We compute the gains $L^i$ and $\Gamma^i$ by solving
  \eqref{eq:stab_LMI}}. Figures \ref{fig:x} and \ref{fig:d}
show the input and
state framers, respectively. It is clear that the algorithm is able to
estimate the state $x_1$ despite the disturbance with only minor
performance degradation. The switching due to
\eqref{eq:state_network_update}, which depends on the noise, is also
evident. The estimation performance for the other states is
comparatively better, since they are only affected by (known) bounded
noise. Furthermore, all agents are able to maintain an accurate
estimate of the disturbance.

\begin{figure}
  \centering
  \includegraphics[width=0.98\columnwidth]{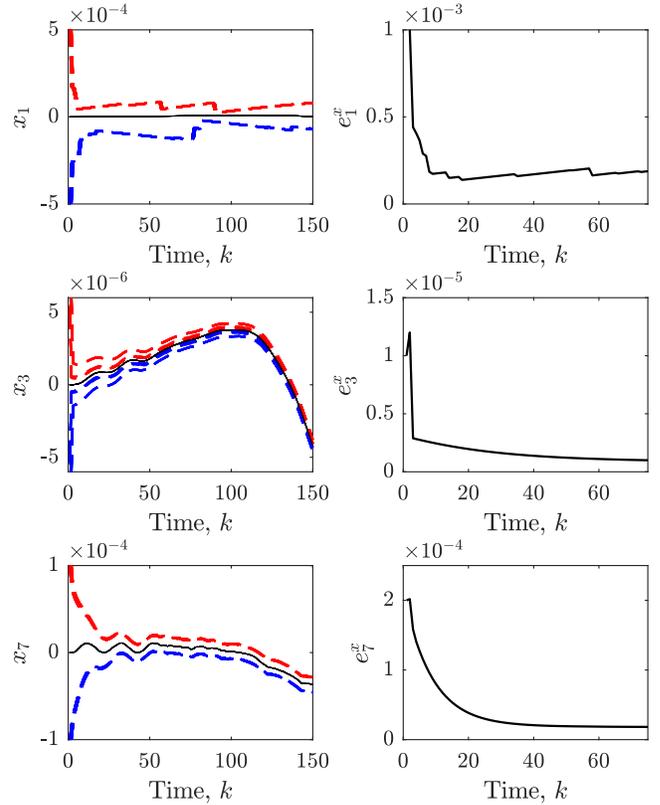}
  \caption{\small State \kj{upper (red) and lower (blue)} framers from all agents \kj{and the true state value (black)} for
    selected states $x_1$, $x_3$, and $x_7$ \kj{(left), as well as  state error interval widths (right).}}
  \label{fig:x}
\end{figure}
\begin{figure}[]
  \centering
  \includegraphics[width=0.98\columnwidth]{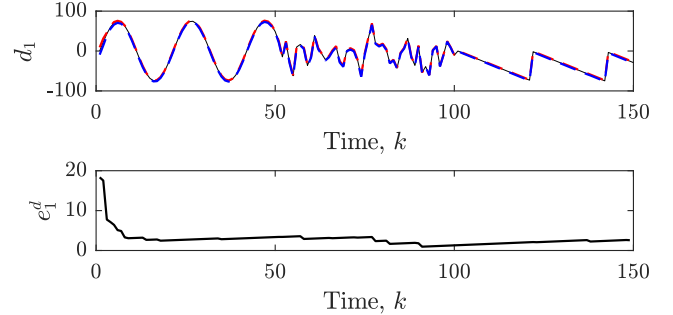}
  \caption{\small Unknown input framers and framer errors from all agents. Blue and red denote lower and upper framers, respectively, and black is the true signal.}
  \label{fig:d}
\end{figure}
\section{Conclusion and Future Work} \label{sec:conclusion} This paper
introduced a novel distributed algorithm for interval estimates of
states and unknown inputs for LTI systems with bounded noise, whose
sensors and actuators are compromised by false data injection
attacks. Without imposing any restrictive assumptions such as
boundedness or stochasticity on the unknown input (attack) signals, we
addressed the correctness of the proposed distributed observer, and
moreover, \scott{analyzed the stability of the observer by considering the
switched linear dynamics of the resulting error system}. Finally,
we provided a tractable method for computing stabilizing gains which
aim to minimize the steady state \kj{input and state} error of the
observer. \kj{Hence, our approach can serve the purpose of resilient
  estimation in bounded-error networked cyber-physical systems. In the
  future, we consider extending our approach to nonlinear and hybrid
  systems, as well as including other type of adversarial effects such
  as switching and network attacks in our setting.}  \setstretch{0.96}
\section*{Appendix}
\subsection{Similarity Transformation}\label{sec:transformation}
Let $r^i\triangleq \rank{(H^i)}$. Using SVD, 
$H^i= \begin{bmatrix}U^i_{1}& U^i_{2} \end{bmatrix} \begin{bmatrix} S^i & 0 \\ 0 & 0 \end{bmatrix} \begin{bmatrix} (V_{1}^i)^ \top \\ (V_{2}^i)^ \top \end{bmatrix}$,
where $S^i \in \mathbb{R}^{r^i \times r^i}$ is a diagonal matrix of full rank, $U_{1} \in \mathbb{R}^{m^i \times r^i}$, $U_{2} \in \mathbb{R}^{m^i \times (m^i-r^i)}$, $V_{1} \in \mathbb{R}^{n_p \times r^i}$ and $V_{2} \in \mathbb{R}^{n_p \times (n_p-r^i)}$, while $U\triangleq \begin{bmatrix} U_{1} & U_{2} \end{bmatrix}$ and $V\triangleq \begin{bmatrix} V_{1} & V_{2} \end{bmatrix}$ are unitary matrices. Then, $d_k$ 
can be decoupled into two orthogonal components as: 
$d^i_{1,k}=(V_{1}^i)^\top d_k,
d^i_{2,k}=(V_{2}^i)^\top d_k,d_k =V^i_{1} d^i_{1,k}+V^i_{2} d^i_{2,k},$ which transforms the system dynamics \eqref{eq:system} to the representation in \eqref{eq:system2}, where $G^i_{1} \triangleq G V^i_{1}$, $G^i_{2} \triangleq G V^i_{2}$, $H^i_{1} \triangleq H^i V^i_{1}=U^i_{1} S^i,{C}^i_{1} \triangleq (U^i_1)^\top {C}^i, {C}^i_{2} \triangleq (U^i_{2})^\top {C}^i, {D}^i_{1} \triangleq (U^i_{1})^\top {D}^i$, and ${D}^i_{2} \triangleq  (U^i_{2})^\top {D}^i$.

\subsection{Proof of Lemma \ref{lem:ind_framer}}\label{sec:proof}
First, note that \eqref{eq:system_2_output1} implies that
\begin{align}\label{eq:system_2_output11}
  d^i_{1,k}=M^i_1(z^i_{1,k} - C_1^ix_k - D_1^iv^i_{k}).
\end{align}
This, in combination with \eqref{eq:system_2_output2} and
\eqref{eq:system_2_state} results in
\begin{align*}
  M^i_2z^i_{2,k+1} &= M^i_2(C_2^ix_{k+1}+D_2^iv^i_{k+1}) \\
                   &=  M^i_2(C_2^i(Ax_k+Bw_k
                     + G^i_1(M^i_1(z^i_{1,k} \\
                   &\quad - C_1^ix_k - D_1^iv^i_{k})
                     + G^i_2d^i_{2,k})
                     + D_2^iv^i_{k+1}),
\end{align*}
which given Assumption \ref{ass:rank}, returns
\tightmath{
  \begin{align}\label{eq:system_2_output22}
    \hspace{-.3cm}d^i_{2,k} = M^i_2(z^i_{k+1} - C^i_2(G^i_1M^i_1z^i_{1,k} + Q^ix_k)) + E^i\eta_{k+1},
  \end{align}}%
where $E^i = M_2^i\begin{bmatrix}C_2^iG_1^iM_1^iD_1^i & -C_2^iB & -D_2^i \end{bmatrix}$. By plugging $d_{1,k}^i$ and $d_{2,k}^i$ from \eqref{eq:system_2_output11} and \eqref{eq:system_2_output22}
into \eqref{eq:system_2_state},
\tightmath{
  \begin{align*}
    \hspace{-.3cm}
    x_{k+1} &= J^iQ^i x_k +
              \tilde{E}^i \eta^i_{k+1}
              + J^iG^i_1M^i_1z^i_{1,k}
              + G^i_2M^i_2C^i_2z^i_{2,k+1},
  \end{align*}}%
where $\tilde{E} = \begin{bmatrix}-J^iG_1^iM_1^iD_1^i & J^iB & -G_2^iM_2^iD_2^i\end{bmatrix}$.
Combined with the fact that
$T^i =I-\Gamma^iC^i_2$, this implies
\tightmath{
  \begin{align}\label{eq:interm_1}
    x_{k+1} 
            =T^i(J^iQ^ix_k+ \tilde{E}^i \eta^i_{k+1}
              + \hat z^i_{k+1})
              + \Gamma^iC_2^ix_{k+1}.
  \end{align}}%
  Plugging in $C^i_2x_{k+1}=z^i_{2,k+1}-D^i_2v^i_{k+1}$
  from \eqref{eq:system_2_output2} and adding
  the \emph{zero term} $L^i(z^i_{2,k}-C^i_2x_k-D^i_2v^i_k)$ to the right
  hand side of \eqref{eq:interm_1}, then collecting like terms,  results in
  \begin{align}\label{eq:sys_equiv}
    x_{k+1}=\mathtt{A}^{i}x_k+\mathtt{L}^{i} \eta^i_{k+1}+ \Psi^i\xi^i_{k+1}, 
  \end{align}
  By applying Proposition \ref{prop:bounding} to
  all the uncertain terms in the right hand side of
  \eqref{eq:sys_equiv}:
    $\underline{x}^{i}_{k} \hspace{-.15cm}\leq\hspace{-.15cm} x_{k} \hspace{-.15cm}\leq\hspace{-.15cm} \overline{x}^{i}_{k}
    \hspace{-.15cm}\Rightarrow \hspace{-.15cm}\underline{x}^{i,0}_{k+1} \hspace{-.15cm}\leq\hspace{-.15cm} x_{k+1}\hspace{-.15cm} \leq\hspace{-.15cm}
    \overline{x}^{i,0}_{k+1}$,
  where $\underline{x}^{i,0}_{k+1},\overline{x}^{i,0}_{k+1}$ are given
  in \eqref{eq:observer}. This means that individual framers/interval
  estimates are correct. When the framer condition is satisfied for
  all nodes, the intersection of all the individual estimates of
  neighboring nodes (cf. \eqref{eq:state_network_update}) also results
  in correct interval framers, i.e.
  \begin{align*}
    \underline{x}^{i,0}_{k} \leq x_{k} \leq \overline{x}^{i,0}_{k},
    \; \forall i \in \nodes
    \implies \underline{x}^{i}_{k} \leq x_{k} \leq \overline{x}^{i}_{k},
    \; \forall i \in \nodes.
  \end{align*}
  Furthermore, plugging $d^i_{1,k}$ and $d^i_{2,k}$ from
  \eqref{eq:system_2_output11} and \eqref{eq:system_2_output22} into
  \eqref{eq:system_2_input} and applying Proposition
  \ref{prop:bounding} returns the input framers in
  \eqref{eq:input_framers}, where their intersection is still a framer
  (cf. \eqref{eq:network_update}) by the same reasoning as for the
  state framers.  Since the initial state framers are known to
  all~$i$, by induction~\eqref{eq:observer}-\eqref{eq:network_update}
  constructs a correct distributed interval state and input framer
  for~\eqref{eq:system}.

\bibliographystyle{unsrturl}
{\tiny
  \bibliography{biblio}

\begin{thebibliography}{10}

\bibitem{cheng2018industrial}
B.~Cheng, J.~Zhang, G.P. Hancke, S.~Karnouskos, and A.W. Colombo.
\newblock Industrial cyber-physical systems: Realizing cloud-based big data
  infrastructures.
\newblock {\em IEEE Industrial Electronics Magazine}, 12(1):25--35, 2018.

\bibitem{zhao2019dse}
J.~Zhao, A.~Gomez-Exposito, M.~Netto, L.~Mili, A.~Abur, V.~Terzija, I.~Kamwa,
  B.~Pal, A.K. Singh, J.~Qi, Z.~Huang, and A.P. Meliopoulos.
\newblock Power system dynamic state estimation: Motivations, definitions,
  methodologies and future work.
\newblock {\em IEEE Transactions on Power Systems}, 34:3188--3198, 07 2019.

\bibitem{sun2017secure}
Y.~Sun and H.~Song.
\newblock {\em Secure and trustworthy transportation cyber-physical systems}.
\newblock Springer, 2017.

\bibitem{ukraine.2016}
K.~Zetter.
\newblock Inside the cunning, unprecedented hack of {U}kraine's power grid.
\newblock Wired Magazine, 2016.

\bibitem{Yong.Zhu.ea.Automatica16}
S.~Z. Yong, M.~Zhu, and E.~Frazzoli.
\newblock A unified filter for simultaneous input and state estimation of
  linear discrete-time stochastic systems.
\newblock {\em Automatica}, 63:321--329, 2016.

\bibitem{lu2016framework}
P.~Lu, E-J. Van~Kampen, C.C. De~Visser, and Q.~Chu.
\newblock Framework for state and unknown input estimation of linear
  time-varying systems.
\newblock {\em Automatica}, 73:145--154, 2016.

\bibitem{abolhasani2018robust}
M.~Abolhasani and M.~Rahmani.
\newblock Robust deterministic least-squares filtering for uncertain
  time-varying nonlinear systems with unknown inputs.
\newblock {\em Systems \& Control Letters}, 122:1--11, 2018.

\bibitem{Pasqualetti.2013}
F.~Pasqualetti, F.~D{\"o}rfler, and F.~Bullo.
\newblock Attack detection and identification in cyber-physical systems.
\newblock {\em IEEE Transactions on Automatic Control}, 58(11):2715--2729,
  November 2013.

\bibitem{kim2016attack}
H.~Kim, P.~Guo, M.~Zhu, and P.~Liu.
\newblock Attack-resilient estimation of switched nonlinear cyber-physical
  systems.
\newblock In {\em American Control Conference (ACC)}, pages 4328--4333. IEEE,
  2017.

\bibitem{nakahira2015dynamic}
Y.~Nakahira and Y.~Mo.
\newblock Dynamic state estimation in the presence of compromised sensory data.
\newblock In {\em IEEE Conference on Decision and Control (CDC)}, pages
  5808--5813, 2015.

\bibitem{pajic2015attack}
M.~Pajic, P.~Tabuada, I.~Lee, and G.J. Pappas.
\newblock Attack-resilient state estimation in the presence of noise.
\newblock In {\em IEEE Conference on Decision and Control (CDC)}, pages
  5827--5832, 2015.

\bibitem{yong2016robust}
S.~Z. Yong, M.Q. Foo, and E.~Frazzoli.
\newblock Robust and resilient estimation for cyber-physical systems under
  adversarial attacks.
\newblock In {\em American Control Conference (ACC)}, pages 308--315. IEEE,
  2016.

\bibitem{yong2018simultaneous}
S.~Z. Yong.
\newblock Simultaneous input and state set-valued observers with applications
  to attack-resilient estimation.
\newblock In {\em American Control Conference (ACC)}, pages 5167--5174. IEEE,
  2018.

\bibitem{blanchini2012convex}
F.~Blanchini and M.~Sznaier.
\newblock A convex optimization approach to synthesizing bounded complexity
  $\ell^{\infty}$ filters.
\newblock {\em IEEE Transactions on Automatic Control}, 57(1):216--221, 2012.

\bibitem{khajenejad2021guaranteed}
M.~Khajenejad, F.~Shoaib, and S.Z. Yong.
\newblock Guaranteed state estimation via indirect polytopic set computation
  for nonlinear discrete-time systems.
\newblock In {\em 2021 60th IEEE Conference on Decision and Control (CDC)},
  pages 6167--6174. IEEE, 2021.

\bibitem{khajenejad2022direct}
M.~Khajenejad, F.~Shoaib, and S.Z. Yong.
\newblock Guaranteed state estimation via direct polytopic set computation for
  nonlinear discrete-time systems.
\newblock {\em IEEE Control Systems Letters}, 6:2060--2065, 2022.

\bibitem{milanese1991optimal}
M.~Milanese and A.~Vicino.
\newblock Optimal estimation theory for dynamic systems with set membership
  uncertainty: An overview.
\newblock {\em Automatica}, 27(6):997--1009, 1991.

\bibitem{khajenejad2019simultaneous}
M.~Khajenejad and S.~Z. Yong.
\newblock Simultaneous input and state set-valued
  $\mathcal{H}_{\infty}$-observers for linear parameter-varying systems.
\newblock In {\em American Control Conference (ACC)}, pages 4521--4526, 2019.

\bibitem{khajenejadasimultaneous}
M.~Khajenejad and S.Z Yong.
\newblock Simultaneous mode, input and state set-valued observers with
  applications to resilient estimation against sparse attacks.
\newblock In {\em 2019 IEEE 58th Conference on Decision and Control (CDC)},
  pages 1544--1550, 2019.

\bibitem{khajenejad2021intervalACC}
M.~Khajenejad, Z.~Jin, and S.Z. Yong.
\newblock Interval observers for simultaneous state and model estimation of
  partially known nonlinear systems.
\newblock In {\em American Control Conference (ACC)}, pages 2848--2854, 2021.

\bibitem{ellero2019unknown}
N.~Ellero, D.~Gucik-Derigny, and D.~Henry.
\newblock An unknown input interval observer for {LPV} systems under
  ${L}_2$-gain and ${L}_{\infty}$-gain criteria.
\newblock {\em Automatica}, 103:294--301, 2019.

\bibitem{khajenejad2022h}
M.~Khajenejad and S.Z. Yong.
\newblock $\mathcal{H}_{\infty}$-optimal interval observer synthesis for
  uncertain nonlinear dynamical systems via mixed-monotone decompositions.
\newblock {\em IEEE Control Systems Letters}, 6:3008--3013, 2022.

\bibitem{pati2022l}
T.~Pati, M.~Khajenejad, S.P. Daddala, and S.Z. Yong.
\newblock ${L}_1$-robust interval observer design for uncertain nonlinear
  dynamical systems.
\newblock {\em IEEE Control Systems Letters}, 6:3475--3480, 2022.

\bibitem{lu2013}
Y.~Lu, L.~Zhang, and X.~Mao.
\newblock Distributed information consensus filters for simultaneous input and
  state estimation.
\newblock {\em Circuits, Systems, and Signal Processing}, 32(2):877--888, Apr
  2013.

\bibitem{ashari2012}
A.~E. Ashari, A.~Y. Kibangou, and F.~Garin.
\newblock Distributed input and state estimation for linear discrete-time
  systems.
\newblock In {\em 2012 IEEE 51st IEEE Conference on Decision and Control},
  pages 782--787, 2012.

\bibitem{khajenejad2022distributed}
M.~Khajenejad, S.~Brown, and S.~Mart{\'\i}nez.
\newblock Distributed interval observers for bounded-error {LTI} systems.
\newblock In {\em American Control Conference (ACC), accepted,
  https://arxiv.org/pdf/2209.01279.pdf}, 2023.

\bibitem{efimov2013interval}
D.~Efimov, T.~Ra{\"\i}ssi, S.~Chebotarev, and A.~Zolghadri.
\newblock Interval state observer for nonlinear time varying systems.
\newblock {\em Automatica}, 49(1):200--205, 2013.

\bibitem{mazenc2014interval}
F.~Mazenc, T.N. Dinh, and S.I. Niculescu.
\newblock Interval observers for discrete-time systems.
\newblock {\em International journal of robust and nonlinear control},
  24(17):2867--2890, 2014.

\bibitem{khajenejad2022resilient}
M.~Khajenejad and S.Z. Yong.
\newblock Resilient state estimation and attack mitigation in cyber-physical
  systems.
\newblock In {\em Security and Resilience in Cyber-Physical Systems}, pages
  149--185. Springer, 2022.

\bibitem{ieee14bus}
{IEEE 14 Bus Test Case}.
\newblock URL: \url{http://labs.ece.uw.edu/pstca/pf14/pg_tca14bus.htm}.

\end{thebibliography}
}

\end{document}